\newcommand{\arxiv}[1]{\href{http://arxiv.org/abs/#1}{arXiv:#1}}
\newcommand*{\mailto}[1]{\href{mailto:#1}{\nolinkurl{#1}}}
\newtheorem{theorem}{Theorem}[section]
\newtheorem{lemma}[theorem]{Lemma}
\newtheorem{remark}[theorem]{Remark}
\newtheorem{hypo}[theorem]{Hypothesis {\bf H.}\hspace*{-0.6ex}}
\newcommand{\R}{{\mathbb R}}
\newcommand{\N}{{\mathbb N}}
\newcommand{\Z}{{\mathbb Z}}
\newcommand{\C}{{\mathbb C}}
\newcommand{\nn}{\nonumber}
\newcommand{\be}{\begin{equation}}
\newcommand{\ee}{\end{equation}}
\newcommand{\ol}{\overline}
\newcommand{\spr}[2]{\langle #1 , #2 \rangle}
\newcommand{\I}{\mathrm{i}}
\newcommand{\E}{\mathrm{e}}
\newcommand{\re}{\mathrm{Re}}
\newcommand{\lz}{\ell^2(\Z)}
\newcommand{\tl}{\mathrm{TL}}
\newcommand{\Ker}{\mathrm{Ker}}
\newcommand{\clos}{\mathop{\rm clos}}
\newcommand{\floor}[1]{\lfloor#1 \rfloor}
\newcommand{\sig}{\sigma}
\newcommand{\lam}{\lambda}
\numberwithin{equation}{section}
\begin{document}

\title[Inverse Scattering Transform for the Toda Hierarchy]{Inverse Scattering 
Transform for the Toda Hierarchy with Steplike Finite-Gap Backgrounds}

\author[I. Egorova]{Iryna Egorova}
\address{B. Verkin Institute for Low Temperature Physics\\ 47 Lenin Ave.\\ 61164 Kharkiv\\ Ukraine}
\email{\mailto{iraegorova@gmail.com}}

\author[J. Michor]{Johanna Michor}
\address{Faculty of Mathematics\\
Nordbergstrasse 15\\ 1090 Wien\\ Austria -- and -- International Erwin Schr\"odinger
Institute for Mathematical Physics\\ Boltzmanngasse 9\\ 1090 Wien\\ Austria}
\email{\mailto{Johanna.Michor@esi.ac.at}}
\urladdr{\url{http://www.mat.univie.ac.at/~jmichor/}}

\author[G. Teschl]{Gerald Teschl}
\address{Faculty of Mathematics\\
Nordbergstrasse 15\\ 1090 Wien\\ Austria -- and -- International Erwin Schr\"odinger
Institute for Mathematical Physics\\ Boltzmanngasse 9\\ 1090 Wien\\ Austria}
\email{\mailto{Gerald.Teschl@univie.ac.at}}
\urladdr{\url{http://www.mat.univie.ac.at/~gerald/}}

\thanks{Research supported by the Austrian Science Fund (FWF) under Grants No.\ Y330, V120.}
\thanks{J. Math. Phys. {\bf 50}, 103521 (2009)}

\keywords{Inverse scattering, Toda hierarchy, periodic}
\subjclass[2000]{Primary 37K15, 37K10; Secondary 47B36, 34L25}

\begin{abstract}
We provide a rigorous treatment of the inverse scattering transform for
the entire Toda hierarchy for solutions which are asymptotically close
to (in general) different finite-gap solutions as $n\to\pm\infty$. 
\end{abstract}

\maketitle

\section{Introduction}

The Toda lattice is one of the most prominent discrete integrable wave equations.
In particular, it can be solved via the inverse scattering method. For the classical
case, where the solution is asymptotically equal to the (same) constant
solution, this is of course well understood and covered in several monographs (e.g.)
\cite{fad}, \cite{ta}, or \cite{tjac}. The corresponding long-time asymptotics were first
computed by Novokshenov and Habibullin \cite{nh} and
were later made rigorous by Kamvissis \cite{km} under the additional assumption that no
solitons are present (the case of solitons was recently added in \cite{krt}; see also
the review \cite{krtrhp}).

The inverse scattering transform for the entire Toda hierarchy in the case of a finite-gap
background was solved only recently by us in \cite{emtist} as a continuation of \cite{tist}.
Similar results were obtained by Khanmamedov \cite{kha2}.
Long-time asymptotics for such solutions have been given by Kamvissis and Teschl
for the case without solitons \cite{kt}, \cite{kt2}, \cite{kt3} and by Kr\"uger and Teschl for the
case with solitons \cite{krt2} (for related trace formulas and conserved quantities see \cite{mtqptr}).

In this respect it is important to mention that even the important case of a one-soliton solution
on a finite-gap background has different spatial asymptotics as $n\to\pm\infty$ and
hence is not covered by the above results (see \cite{emt4}, \cite{tag}). Hence this clearly
raises the need to extend the results from \cite{emtist} to the case of solutions which
are asymptotically equal to (in general) different finite-gap solutions as $n\to\pm\infty$.

In fact, the simplest case, where the solution is asymptotically equal to two different constant
solutions, has already attracted considerable interest in the past.
The first to solve the corresponding Cauchy problem (in the case of rapid
decay with respect to the background) seems to be Oba \cite{oba}. Moreover,
the long-time asymptotics were considered in \cite{bdme2}, \cite{dkv}, \cite{dkkz}, \cite{gk},
\cite{hfm}, \cite{km2}, \cite{vdo}.

Our aim here is to fill this gap and to provide a treatment of the inverse scattering transform
for the entire Toda hierarchy in the case of steplike quasi-periodic finite-gap backgrounds.
Note that since we treat the entire Toda hierarchy, our results also cover the Kac--van Moerbeke
hierarchy as a special case \cite{mtlax}.

Finally, we remark that the corresponding result for the Korteweg--de Vries equation is
much more involved and was only recently solved by Egorova, Grunert, and Teschl \cite{egt}
under some additional restrictions on the spectra of the background operators.

After introducing the Toda hierarchy in Section~\ref{secth}, we will first show that
a solution will stay close to a given background solution in Section~\ref{secivp}.
This result implies that a short-range perturbation of a steplike finite-gap
solution will stay short-range for all time, and it shows that the time-dependent scattering
data satisfy the hypothesis necessary for the Gel'fand--Levitan--Marchenko theory \cite{mar}.
This result constitutes the main technical ingredient for the inverse scattering
transform. In Section~\ref{secqp} we review some necessary facts on
quasi-periodic finite-gap solutions and in Section~\ref{secist} we compute
the time dependence of the scattering data and discuss its dynamics.

\section{The Toda hierarchy}
\label{secth}

In this section we introduce the Toda hierarchy using the standard Lax formalism
(\cite{lax}). We first review some basic facts from \cite{bght} (see also \cite{ghmt}, \cite{tjac}).

We will only consider bounded solutions and hence require

\begin{hypo} \label{H t0}
Suppose $a(t)$, $b(t)$ satisfy
\[
a(t) \in \ell^{\infty}(\Z, \R), \qquad b(t) \in \ell^{\infty}(\Z, \R), \qquad
a(n,t) \neq 0, \qquad (n,t) \in \Z \times \R,
\]
and let $t \mapsto (a(t), b(t))$ be differentiable in 
$\ell^{\infty}(\Z) \oplus \ell^{\infty}(\Z)$.
\end{hypo}

\noindent
Associated with $a(t), b(t)$ is a Jacobi operator
\begin{equation}
H(t): \lz  \to  \lz, \qquad f \mapsto \tau(t) f,
\end{equation}
where
\begin{equation}
\tau(t) f(n)= a(n,t) f(n+1) + a(n-1,t) f(n-1) + b(n,t) f(n)
\end{equation}
and $\lz$ denotes the Hilbert space of square summable (complex-valued) sequences
over $\Z$. Moreover, choose constants $c_0=1$, $c_j$, $1\le j \le r$, $c_{r+1}=0$, set
\begin{align} \nn
g_j(n,t) &= \sum_{\ell=0}^j c_{j-\ell} \spr{\delta_n}{H(t)^\ell \delta_n},\\ \label{todaghsp}
h_j(n,t) &= 2 a(n,t) \sum_{\ell=0}^j c_{j-\ell}  \spr{\delta_{n+1}}{H(t)^\ell
\delta_n} + c_{j+1},
\end{align}
where $\spr{\delta_m}{A \delta_n}$ denote the matrix elements of an operator $A$ with
respect to the standard basis, and consider the Lax operator
\begin{equation}  \label{btgptdef}
P_{2r+2}(t) = -H(t)^{r+1} + \sum_{j=0}^r ( 2a(t) g_j(t) S^+ -h_j(t)) H(t)^{r-j} +
g_{r+1}(t),
\end{equation}
where $S^\pm f(n) = f(n\pm1)$. Restricting to the two-dimensional nullspace 
\[
\Ker(\tau(t)-z), \quad z\in\C,
\]
of $\tau(t)-z$, we have the following representation of $P_{2r+2}(t)$:
\begin{equation} \label{btqptFG}
P_{2r+2}(t)\Big|_{\Ker(\tau(t)-z)} =2a(t) G_r(z,.,t) S^+ - H_{r+1}(z,.,t),
\end{equation}
where $G_r(z,n,t)$ and $H_{r+1}(z,n,t)$ are monic polynomials in $z$ of the
type
\begin{align} \nn
G_r(z,n,t) &= \sum_{j=0}^r z^j g_{r-j}(n,t),\\ \label{fgdef}
H_{r+1}(z,n,t) &= z^{r+1} + \sum_{j=0}^r z^j h_{r-j}(n,t) - g_{r+1}(n,t).
\end{align}
A straightforward computation shows that the Lax equation
\begin{equation} \label{laxp}
\frac{d}{dt} H(t) -[P_{2r+2}(t), H(t)]=0, \qquad t\in\R,
\end{equation}
is equivalent to
\begin{align} \nn
\tl_r (a(t), b(t))_1 &= \dot{a}(n,t) -a(n,t) \Big(g_{r+1}(n+1,t) -
g_{r+1}(n,t) \Big)=0,\\ \label{tlrabo}
\tl_r (a(t), b(t))_2 &= \dot{b}(n,t) - \Big(h_{r+1}(n,t) -h_{r+1}(n-1,t) \Big)=0,
\end{align}
where the dot denotes a derivative with respect to $t$.
Varying $r\in \N_0$ yields the Toda hierarchy
$\tl_r(a,b) =(\tl_r (a,b)_1, \tl_r (a,b)_2) =0$. We will always consider $r$ as
a fixed, but arbitrary, value.

Finally, we recall that the Lax equation \eqref{laxp} implies existence of
a unitary propagator $U_r(t,s)$ such that the family of operators
$H(t)$, $t\in\R$, are unitarily equivalent, $H(t) = U_r(t,s) H(s) U_r(s,t)$.

\section{The initial value problem}
\label{secivp}

First of all we recall the basic existence and uniqueness theorem for the Toda hierarchy
(see, e.g., \cite{tist}, \cite{ttkm}, or \cite[Section~12.2]{tjac}).

\begin{theorem} \label{thmexistandunique}
Suppose $(a_0,b_0) \in M = \ell^\infty(\Z) \oplus \ell^\infty(\Z)$. Then there exists a unique
integral curve $t \mapsto (a(t),b(t))$ in $C^\infty(\R,M)$ of the Toda hierarchy, that is,
$\tl_r(a(t),b(t))=0$, such that $(a(0),b(0)) = (a_0,b_0)$.
\end{theorem}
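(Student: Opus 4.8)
The strategy is the standard one for ODEs in Banach spaces: rewrite the Toda equations $\tl_r(a,b)=0$ as an autonomous first-order system $\dot{y} = F(y)$ on the Banach space $M = \ell^\infty(\Z)\oplus\ell^\infty(\Z)$, check that the vector field $F$ is smooth (indeed locally Lipschitz), and invoke the Picard--Lindelöf theorem to get a unique local $C^\infty$ solution. Looking at \eqref{tlrabo}, the right-hand sides are $\dot a(n) = a(n)\bigl(g_{r+1}(n+1)-g_{r+1}(n)\bigr)$ and $\dot b(n) = h_{r+1}(n)-h_{r+1}(n-1)$, and by \eqref{todaghsp} the quantities $g_{r+1}(n)$ and $h_{r+1}(n)$ are finite universal polynomial expressions in the matrix elements $\spr{\delta_m}{H^\ell\delta_n}$ for $0\le\ell\le r+1$ and $|m-n|\le\ell$. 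Each such matrix element is in turn a finite polynomial in the entries $a(k),b(k)$ with $|k-n|\le r+1$, so $F=(F_1,F_2)$ has the form $F_i(y)(n) = P_i\bigl(a(n-r-1),\dots,a(n+r+1),b(n-r-1),\dots,b(n+r+1)\bigr)$ for fixed polynomials $P_1,P_2$ independent of $n$.

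From this explicit form I would verify that $F\colon M\to M$ is well defined and $C^\infty$. Well-definedness: since each component is a fixed polynomial evaluated at finitely many shifted entries, and $\ell^\infty$ is a Banach algebra under pointwise multiplication with the shift operators $S^\pm$ bounded, $F(y)$ is again a bounded sequence, with $\|F(y)\|_\infty$ controlled by a polynomial in $\|y\|_\infty$. Smoothness and the local Lipschitz estimate follow the same way: on any ball $\{\|y\|_\infty\le R\}$ the difference $F(y)-F(\tilde y)$ is, entrywise, a difference of polynomial values whose arguments differ by shifts of $y-\tilde y$; using the algebra property and bounding the finitely many polynomial cross-terms by $R$ gives $\|F(y)-F(\tilde y)\|_\infty \le C(R)\,\|y-\tilde y\|_\infty$. (One can equally observe that $F$ is a polynomial map between Banach spaces in the sense of being a finite sum of bounded multilinear maps, hence automatically analytic.) Picard--Lindelöf then yields a unique maximal solution $t\mapsto(a(t),b(t))$ in $C^1$, and bootstrapping $\dot y = F(y)$ with $F$ smooth upgrades this to $C^\infty(\mathrm{I},M)$ on the maximal interval $\mathrm{I}$ around $t=0$.

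The remaining point — and the one place where some genuine (though standard) work is needed — is global existence, i.e.\ that the maximal interval is all of $\R$ rather than just a neighborhood of $0$. Here the a priori bound comes from the Lax structure recalled at the end of Section~\ref{secth}: along a solution, $H(t)$ is unitarily equivalent to $H(0)$, so $\|H(t)\| = \|H(0)\|$ for all $t$ in the interval of existence. Since $\|H(t)\|$ controls $\sup_n(|a(n,t)|+|b(n,t)|)$ up to a fixed constant (e.g.\ $|b(n,t)|\le\|H(t)\|$ and $|a(n,t)|\le\|H(t)\|$ by testing against $\delta_n$ and $\delta_n\pm\delta_{n\pm1}$), we get a time-independent bound $\|(a(t),b(t))\|_M \le 2\|H(0)\|$. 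This uniform bound precludes blow-up in finite time, so by the standard continuation criterion the solution extends to all of $\R$. Combining: the unique local $C^\infty$ solution from Picard--Lindelöf is in fact global, giving the asserted integral curve in $C^\infty(\R,M)$; uniqueness on $\R$ follows from local uniqueness together with a connectedness argument. The main obstacle to watch is making the "$\|H(t)\|$ bounds $\|(a,b)\|_M$" step clean and checking that the formal Lax-pair computation \eqref{laxp}–\eqref{tlrabo}, together with existence of the unitary propagator $U_r(t,s)$, is legitimate for merely $\ell^\infty$ coefficients; but both are routine and already in the cited references (\cite{tist}, \cite{ttkm}, \cite[Section~12.2]{tjac}).
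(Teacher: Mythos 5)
Your argument is correct and is essentially the standard proof found in the references the paper cites for this theorem (\cite{tist}, \cite{ttkm}, \cite[Section~12.2]{tjac}); the paper itself gives no proof. Both the local Picard--Lindel\"of step for the polynomial vector field on $\ell^\infty(\Z)\oplus\ell^\infty(\Z)$ and the global continuation via the a priori bound $\sup_n(|a(n,t)|+|b(n,t)|)\le 2\|H(t)\|=2\|H(0)\|$ coming from the Lax-pair unitary equivalence are exactly the ingredients used there (and the latter bound is also invoked verbatim in the proof of Lemma~\ref{lemtsr}).
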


\noindent
In \cite{tist} it was shown that solutions which are asymptotically close to the constant
solution at the initial time stay close for all time. Our first aim is to extend this
result to include perturbations of quasi-periodic finite-gap solutions. In fact,
we will even be a bit more general. Set
\be
\|(a,b)\|_{w,p} = \begin{cases}
\left(\sum\limits_{n \in \Z} w(n) \Big(|a(n)|^p + |b(n)|^p \Big)\right)^{1/p}, & 1\le p <\infty,\\
\sup\limits_{n \in \Z} w(n) \Big(|a(n)| + |b(n)| \Big), & p=\infty.
\end{cases}
\ee
Then

\begin{lemma} \label{lemtsr}
Let $w(n)\ge 1$ be some weight with $\sup_n( |\frac{w(n+1)}{w(n)}| + |\frac{w(n)}{w(n+1)}| )<\infty$ and
fix some $1 \le p \le \infty$.
Suppose $a(n,t)$, $b(n,t)$ and $a_\pm(n,t)$, $b_\pm(n,t)$ are arbitrary 
bounded solutions of the Toda hierarchy and abbreviate
\be
\bar a(n,t)= \begin{cases} a_+(n,t), & n\ge 0,\\ a_-(n,t), & n<0,\end{cases}\qquad
\bar b(n,t)= \begin{cases} b_+(n,t), & n\ge 0,\\ b_-(n,t), & n<0.\end{cases}
\ee
Then, if
\be \label{H t2}
\|\big(a(t)-\bar a(t), b(t)-\bar b(t)\big)\|_{w,p} < \infty
\ee
holds for one $t=t_0 \in \R$, then it holds for all $t \in \R$.
\end{lemma}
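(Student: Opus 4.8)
The plan is to set up a difference equation for the quantity $c(n,t) = (a(n,t)-\bar a(n,t),\, b(n,t)-\bar b(n,t))$ and to run a Gronwall-type argument in the weighted space. First I would fix $t$ in a bounded interval $[t_0-T, t_0+T]$ and observe that, by Theorem~\ref{thmexistandunique} together with Hypothesis~\ref{H t0}, all of $a(t), b(t), a_\pm(t), b_\pm(t)$ and hence $\bar a(t), \bar b(t)$ are uniformly bounded on this interval, say by some constant $C_0$; consequently the Jacobi operators $H(t)$, $H_\pm(t)$ and their associated quantities $g_j, h_j, g_{r+1}, h_{r+1}$ are built polynomially (of degree $\le r+1$) from these coefficients and are therefore also uniformly bounded on $[t_0-T,t_0+T]$. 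The key structural point is that $g_{r+1}(n,t)$ and $h_{r+1}(n,t)$, viewed via \eqref{todaghsp} as finite sums of matrix elements $\spr{\delta_m}{H(t)^\ell\delta_n}$ with $|m-n|\le \ell \le r+1$, depend only on the coefficients $a(k,t), b(k,t)$ with $|k-n|\le r+1$. Hence the difference $g_{r+1}(n,t)-\bar g_{r+1}(n,t)$ (where $\bar g_{r+1}$ is the corresponding quantity for the operator with coefficients $\bar a, \bar b$, which away from a bounded block around $n=0$ coincides with $g_{r+1}^\pm$) is a polynomial expression in the coefficients that vanishes when $c(k,t)=0$ for all $|k-n|\le r+1$, and similarly for $h_{r+1}$.

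Next I would write down the evolution of $c(n,t)$. From \eqref{tlrabo},
\begin{align} \nn
\dot a(n,t) &= a(n,t)\big(g_{r+1}(n+1,t)-g_{r+1}(n,t)\big),\\ \nn
\dot b(n,t) &= h_{r+1}(n,t)-h_{r+1}(n-1,t),
\end{align}
and the same identities hold for $(a_\pm,b_\pm)$, hence for $(\bar a,\bar b)$ for each fixed $n$ except possibly at $n=0$ where the piecewise definition could in principle cause trouble — but since the equations are local this is only a finite set of indices and contributes a bounded forcing term. Subtracting, one obtains $\dot c(n,t) = (\text{telescoped differences of } g_{r+1}, h_{r+1}) $, which by the Lipschitz/polynomial dependence noted above is bounded by a constant $C_1$ (depending on $C_0$, $r$) times $\sum_{|k-n|\le r+2} (|c_a(k,t)| + |c_b(k,t)|)$. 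Writing this in the Banach space $X_{w,p}$ with norm $\|\cdot\|_{w,p}$, the shift-boundedness assumption $\sup_n(|w(n+1)/w(n)|+|w(n)/w(n+1)|)<\infty$ guarantees that the finite-range averaging operator $c \mapsto \big(\sum_{|k-\cdot|\le r+2}|c(k)|\big)$ is bounded on $X_{w,p}$ (this is where the weight hypothesis is used, to compare $w(n)$ with $w(k)$ for $|k-n|\le r+2$); therefore $t\mapsto c(t)$ satisfies, in the integrated form, $\|c(t)\|_{w,p} \le \|c(t_0)\|_{w,p} + C_2\int_{t_0}^t \|c(s)\|_{w,p}\,ds$ plus a bounded contribution from the single index $n=0$. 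Since the right-hand side involves only $c$ itself, a standard Gronwall argument shows $\|c(t)\|_{w,p}\le (\|c(t_0)\|_{w,p}+C_3|t-t_0|)\,e^{C_2|t-t_0|} <\infty$ for all $t\in[t_0-T,t_0+T]$, and since $T$ was arbitrary the claim follows for all $t\in\R$. The argument is symmetric in $t_0$ so it propagates both forward and backward in time.

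The main obstacle, and the step deserving the most care, is making rigorous the claim that the nonlinearity is Lipschitz on the weighted space — i.e. controlling $g_{r+1}(n,t)-\bar g_{r+1}(n,t)$ and $h_{r+1}(n,t)-\bar h_{r+1}(n,t)$ by $\sum_{|k-n|\le r+2}(|c_a(k,t)|+|c_b(k,t)|)$ uniformly in $t$ on compact intervals, and checking that the corresponding finite-range operator is bounded on $X_{w,p}$ for every $p\in[1,\infty]$. For this one expands the matrix elements $\spr{\delta_m}{H^\ell\delta_n}$ as sums over nearest-neighbour paths of products of at most $\ell$ entries $a(k), b(k)$; replacing one factor of $H$ by $\bar H$ at a time (a telescoping in operator products) expresses the difference as a sum of terms each containing exactly one factor $a(k)-\bar a(k)$ or $b(k)-\bar b(k)$ with $|k-n|\le r+1$, all other factors bounded by $C_0$. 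Once that combinatorial bound is in place, the weighted $\ell^p$ estimate on the averaging operator is an elementary application of the triangle inequality (and Hölder for $1<p<\infty$) using $\sup_n |w(n\pm1)/w(n)| <\infty$ iterated at most $r+2$ times; the handful of boundary indices near $n=0$ are absorbed into an explicit bounded inhomogeneous term. I do not anticipate genuine difficulty beyond bookkeeping once this Lipschitz estimate is established, since after that the result is a textbook Picard–Lindelöf/Gronwall argument in a Banach space.
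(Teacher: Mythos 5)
Your overall strategy is the same as the paper's: linearize the equation for the difference, exploit the finite range of $g_{r+1},h_{r+1}$ to get a shift-bounded linear operator on the weighted space plus a finitely supported inhomogeneity from the mismatch of $\bar a,\bar b$ at $n=0$, and close with Gronwall. (The paper establishes the "linear combination of shifts of the difference" structure by induction on $r$ via the recursion for $g_j,h_j$ rather than by your path-expansion/telescoping of $\spr{\delta_m}{H^\ell\delta_n}$, but that is a cosmetic difference.)

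There is, however, one genuine logical gap in your write-up: you apply Gronwall directly to $\|c(t)\|_{w,p}$ for the difference $c(t)$ of the \emph{given} Toda solutions, via the integrated inequality $\|c(t)\|_{w,p}\le\|c(t_0)\|_{w,p}+C_2\int_{t_0}^t\|c(s)\|_{w,p}\,ds+\dots$. Gronwall's lemma only yields a conclusion if the quantity being estimated is already known to be finite (at least on some interval), and the finiteness of $\|c(t)\|_{w,p}$ for $t\ne t_0$ is precisely what the lemma asserts; your inequality is vacuous if $\|c(s)\|_{w,p}=\infty$ for some $s$. The paper avoids this circularity by reversing the logic: it \emph{solves} the inhomogeneous linear ODE $\dot\delta=\sum_j A_{r,j}(S^+)^j\delta+B_r$ in the Banach space $X_{w,p}$ with initial datum $c(t_0)$ (a linear equation with bounded coefficients has a global solution there, and Gronwall applies legitimately to that solution), and only then identifies this $\delta(t)$ with the actual difference $c(t)$. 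The identification is the reason the hypothesis $w(n)\ge1$ is needed: it guarantees $\|\cdot\|_\infty\le\|\cdot\|_{w,p}$, so the constructed $\delta(t)$ is bounded, hence $(\bar a+\delta_1,\bar b+\delta_2)$ is a bounded solution of the Toda hierarchy and coincides with $(a,b)$ by the uniqueness part of Theorem~\ref{thmexistandunique}. Your proposal uses $w\ge1$ nowhere, which is a sign that this step is missing. The gap is fixable (either by the paper's route, or by a truncation/continuation argument), but as written the Gronwall step does not go through.
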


\begin{proof}
Without loss of generality we assume that $t_0 = 0$. Let us consider the differential
equation for the differences $\delta(n,t)= \big(a(n,t) - \bar a(n,t), b(n,t) - \bar b(n,t) \big)$
in the Banach space of pairs of bounded sequences $\delta=(\delta_1,\delta_2)$ for which
the norm $\| \delta \|_{w,p}$ is finite.
We claim that $\delta$ satisfies an inhomogeneous linear differential equation of the form
\[
\dot{\delta}(t) = \sum_{|j|\le r+1} A_{r,j}(t) (S^+)^j \delta(t) + B_r(t)
\]
(see e.g.\ \cite{dei} for the theory of ordinary differential equations in Banach spaces).
Here $S^\pm(\delta_1(n,t),\delta_2(n,t))=(\delta_1(n\pm 1,t),\delta_2(n\pm 1,t))$ are the shift operators,
\[
A_{r,j}(n,t)= \begin{pmatrix}
A_{r,j}^{11}(n,t) & A_{r,j}^{12}(n,t)\\ A_{r,j}^{21}(n,t) & A_{r,j}^{22}(n,t)
\end{pmatrix},
\]
are multiplication operators with bounded two by two matrix-valued sequences,
and
\[
B_r(n,t)= \begin{pmatrix} B_{r,1}(n,t)\\ B_{r,2}(n,t) \end{pmatrix}
\]
is a vector in our Banach space with $B_{r,i}(n,t)=0$ for $|n|>\floor{\frac{r}{2}}+1$.
All entries of $A_{r,j}(t)$ and $B_r(t)$ are polynomials with respect to
$(a(n+j,t),b(n+j,t))$, $(a_\pm(n+j,t),b_\pm(n+j,t))$, $|j|\leq \floor{\frac{r}{2}}+1$.
Moreover, by our assumption the shift operators are continuous,
\[
\|S^\pm\| = \begin{cases} \sup_{n\in\Z} |\frac{w(n)}{w(n\pm1)}|^{1/p}, & p\in[1,\infty),\\
 \sup_{n\in\Z} |\frac{w(n)}{w(n\pm1)}|, & p=\infty, \end{cases}
\]
and same is true for the multiplication operators $A_{r,j}(t)$ whose norms depend only on the supremum
of the entries by H\"older's inequality, that is, on the sup norms of $(a(t),b(t))$ and $(a_\pm(t),b_\pm(t))$.
Finally, recall that by unitary equivalence of the operator family $H(t)$, respectively $H_\pm(t)$,
we have a uniform bound of the sup norm $\sup_n (|a(n,t)| + |b(n,t)|) \le 2\|H(t)\|= 2\|H(0)\|$, respectively
$\sup_n (|a_\pm(n,t)| + |b_\pm(n,t)|) \le 2\|H_\pm(t)\|= 2\|H_\pm(0)\|$.
Consequently, there is a constant such that  $\sum_{|j|\le r+1} \|A_{r,j}(t)\| \|(S^+)^j\| \le C_r$.
Moreover, we will show below that the vector $B_r(t)$ has only finitely many nonzero entries
and thus $\|B_r(t)\|_{w,p} \le D_r$, where the constant again depends only on the sup norms
of $(a(t),b(t))$ and $(a_\pm(t),b_\pm(t))$.
Hence
\[
\|\delta(t)\|_{w,p} \le \|\delta(0)\|_{w,p} + \int_0^t \big(C_r \|\delta(s)\|_{w,p} + D_r\big)
\]
and Gronwall's inequality implies
\[
\|\delta(t)\|_{w,p} \le \|\delta(0)\|_{w,p} \E^{C_r t} + \frac{D_r}{C_r} ( \E^{C_r t} -1).
\]
It remains to show existence of the above differential equation. This will follow once we
show that $g_{r+1}(t) - \bar g_{r+1}(t)$ and $h_{r+1}(t) - \bar h_{r+1}(t)$ can be written as
a linear combination of shifts of $\delta$ with the coefficients depending only on 
$(a(t),b(t))$ and $(a_\pm(t),b_\pm(t))$. The fact that $(\bar a, \bar b)$ does
not solve $\tl_r$ only affects finitely many terms and gives rise to the inhomogeneous term $B_r(t)$
which is nonzero only for a finite number of terms.

To see that $g_{r+1}(t) - \bar g_{r+1}(t)$ and $h_{r+1}(t) - \bar h_{r+1}(t)$ can be written as
a linear combination of shifts of $\delta$ we can use induction on $r$. It suffices to consider the
homogenous case where $c_j=0$, $1\leq j \leq r$, since all involved sums are finite. 
In this case \cite[Lemma~6.4]{tjac} shows that $g_j(n,t)$, $h_j(n,t)$ can be recursively computed from 
$g_0(n,t)=1$, $h_0(n,t)=0$ via
\begin{align} \nn
g_{j+1}(n,t) &= \frac{1}{2} \big(h_j(n,t) + h_j(n-1,t)\big)
+ b(n,t) g_j(n,t), \\ \nonumber
h_{j+1}(n,t)&= 2 a(n,t)^2 \sum_{l=0}^j g_{j-l}(n,t)g_l(n+1,t)
- \frac{1}{2} \sum_{l=0}^j h_{j-l}(n,t)h_l(n,t)
\end{align}
and similarly for $\bar g_j(n,t)$, $\bar h_j(n,t)$. Hence the claim follows.

Finally, observe that since $w(n)\ge 1$ this solution is bounded and hence
coincides with the solution of the Toda equation from Theorem~\ref{thmexistandunique}.
\end{proof}

For closely related results we also refer to \cite{ttd}.

\section{Quasi-periodic finite-gap solutions}
\label{secqp}

As a preparation for our next section we first need to recall some facts on
quasi-periodic finite-gap solutions (again see \cite{bght}, \cite{ghmt}, or \cite{tjac}).

Let $H_q^\pm$ be two quasi-periodic finite-band Jacobi operators,\footnote{Everywhere in this
paper the sub or super index "$+$" (resp.\ "$-$") refers to the background on the right
(resp.\ left) half-axis.}
\be
H_q^\pm(t) f(n) = a_q^\pm(n,t) f(n+1) + a_q^\pm(n-1,t) f(n-1) + b_q^\pm(n,t) f(n)
\ee
in $\ell^2(\Z)$ associated with the Riemann surface of the square root
\begin{equation}\label{defP}
P_\pm(z)= -\prod_{j=0}^{2g_\pm+1} \sqrt{z-E_j^\pm}, \qquad
E_0^\pm < E_1^\pm < \cdots < E_{2g_\pm+1}^\pm,
\end{equation}
where $g_\pm\in \N$ and $\sqrt{.}$ is the standard root with branch
cut along $(-\infty,0)$. In fact, $H_q^\pm(t)$ are uniquely determined by
fixing a Dirichlet divisor
$\sum_{j=1}^{g^\pm}(\mu_j^\pm(t), \sig_j^\pm(t))$, where
$\mu_j^\pm(t)\in[E_{2j-1}^\pm,E_{2j}^\pm]$ and $\sig_j^\pm(t)\in\{-1, 1\}$.
The time evolution of the Dirichlet divisor is determined by the Dubrovin
equations (cf.\ \cite[(13.2)]{tjac})and linearized by the Abel map (cf.\ \cite[Sect.~13.2]{tjac}).
The spectra of $H_q^\pm(t)$ consist of $g_\pm+1$ bands
\be \label{0.1}
\sig_\pm:= \sig(H_q^\pm(t)) = \bigcup_{j=0}^{g_\pm}
[E_{2j}^\pm,E_{2j+1}^\pm].
\ee
We will identify the set
$\C\setminus\sig(H_q^\pm(t))$ with the upper sheet of the Riemann surface.
Associated with $H_q^\pm(t)$ are the Weyl solutions
\be
\psi_q^\pm(z,n,t) \in \ell^2(\pm\N)
\ee
normalized such that $\psi_q^\pm(z,0,t)=1$. We will use the convention
that for $\lam\in\sig_\pm$ we set $\psi_q^\pm(\lam,n,t) = \lim_{\epsilon \downarrow 0}\psi_q^\pm(\lam + \I\epsilon)$.
Then
\be
\hat{\psi}_q^\pm(z,n,t) = \exp\big(\alpha_r^\pm(z,t)\big) \psi_q^\pm(z,n,t)
\ee
satisfies
\begin{align}
H_q^\pm(t) \hat\psi_q^\pm(z,n,t) &= z \hat\psi_q^\pm(z,n,t), \\
\frac{d}{dt} \hat\psi_q^\pm(z,n,t) &= P_{q,2r+2}^\pm(t) \hat\psi_q^\pm(z,n,t),
\end{align}
where (\cite{tjac}, (13.47))
\be
\alpha_r^\pm(z,t) =
\int_0^t \big(2 a_q^\pm(0,s) G_{q,r}^\pm(z,0,s) \psi_q^\pm(z,1,s) 
- H_{q,r+1}^\pm(z,0,s) \big)ds.
\ee
Note that the integrand in this last expression might have poles if $z$ lies in
one of the spectral gaps $[E^\pm_{2j-1},E^\pm_{2j}]$. Hence one has to understand
$\alpha_r^\pm(z,t)$ as a limit from $z\in\C\backslash\R$ for such values of $z$.
Alternatively one can use the expression in terms of Riemann theta functions.
Moreover, $\exp\big(\alpha_r^\pm(z,t)\big)$ has simple poles at $\mu^\pm_j(0)$
and simple zeros at  $\mu^\pm_j(t)$. We refer to the discussion in \cite{emtist} for
further details.

\section{Inverse scattering transform}
\label{secist}

Fix two quasi-periodic finite-gap solutions $a_q^\pm(n,t)$, $b_q^\pm(n,t)$ as in the previous section.
Let $a(n,t)$, $b(n,t)$ be a solution of the Toda hierarchy satisfying
\be \label{hypo}
\sum_{n = 0}^{\pm \infty} (1+|n|) \Big(|a(n,t) -
a_q^\pm(n,t)| + |b(n,t) - b_q^\pm(n,t)| \Big) < \infty
\ee
for one (and hence for any) $t_0 \in \R$. 
In \cite{emtstp2} (see also \cite{emtqps}, \cite{emtstp}, \cite{voyu}) we have developed scattering theory
for the Jacobi operator $H(t)$ associated with $a(n,t)$, $b(n,t)$. 
Jost solutions, transmission and reflection coefficients now depend on an 
additional parameter $t \in \R$. 
The essential spectrum of $H(t)$ is (absolutely) continuous and
\be
\sigma(H(t)) \equiv \sigma(H), \quad 
\sigma_{ess}(H)=\sig_+\cup\sig_-, \quad 
\sigma_p(H)=\{\lam_k\}_{k=1}^p \subseteq \R \backslash \sigma_{ess}(H),
\ee
where $p \in \N$ is finite. 
We introduce the sets
\be
\sig^{(2)}:=\sig_+\cap\sig_-, \quad
\sig_\pm^{(1)}=\clos\,(\sig_\pm\setminus\sig^{(2)}), \quad
\sig:=\sig_+\cup\sig_-,
\ee
where $\sig$ is the (absolutely) continuous
spectrum of $H(t)$ and $\sig_+^{(1)}\cup \sig_-^{(1)}$,
$\sig^{(2)}$ are the parts which are of multiplicity one, two, respectively.

The Jost solutions $\psi_\pm(z,n,t)$ are
normalized such that 
\be
\psi_\pm(z,n,t)= \psi_q^\pm(z,n,t)\,(1 + o(1)) \quad
\mbox{as } n\to\pm\infty.
\ee
Transmission $T_\pm(\lambda,t)$ and reflection $R_\pm(\lambda,t)$ 
coefficients are defined via the scattering relations
\be
T_\mp(\lambda,t) \psi_\pm(\lambda,n,t)= \overline{\psi_\mp(\lambda,n,t)} + 
R_{\mp}(\lambda,t) \psi_\mp(\lambda,n,t), \qquad \lambda \in \sigma_\mp,
\ee
which implies
\be
T_\pm(\lam,t):= \frac{W(\overline{\psi_\pm(\lam,t)}, \psi_\pm(\lam,t))}{W(\psi_\mp(\lam,t), \psi_\pm(\lam,t))},\quad
R_\pm(\lam,t):= - \frac{W(\psi_\mp(\lam,t),\overline{\psi_\pm(\lam,t)})}{W(\psi_\mp(\lam,t),\,\psi_\pm(\lam,t))},
\ee
$\lam\in \sigma_\pm$. 
Here $W_n(f,g)=a(n)(f(n)g(n+1)-f(n+1)g(n))$ denotes the usual Wronski determinant.

To define the norming constants we need to remove the poles of
$\psi^\pm(z,n,t)$ by introducing
\be
\tilde\psi^\pm(z,n,t)=\delta_\pm(z,t) \psi^\pm(z,n,t),\qquad
\delta_\pm(z,t) := \prod_{\mu^\pm_j(t)\in M_\pm(t)}  (z - \mu^\pm_j(t)),
\ee
where
\be
M^\pm(t) = \{\mu_j^\pm(t) \,|\, \mu_j^\pm(t) \in \R\backslash\sig_\pm \mbox{
is a pole of } \psi_q^\pm(z,1,t)\}.
\ee
The norming constants $\gamma_{\pm, k}(t)$ corresponding to $\lam_k \in \sigma_p(H)$
are then given by
\be
\gamma_{\pm, k}(t)^{-1}=\sum_{n \in \Z} |\tilde\psi_\pm(\lam_k, n,t)|^2.
\ee

\begin{lemma} \label{th: sol H t}
Let $(a(t),b(t))$ be a solution of the Toda hierarchy such that \eqref{hypo} holds.
The functions 
\be\label{hatpsi}
\hat\psi_\pm(z,n,t)=\exp(\alpha_r^\pm(z,t)) \psi_{\pm}(z,n,t) 
\ee
satisfy
\be \label{H t 1}
H(t) \hat\psi_\pm(z,n,t) = z \hat\psi_\pm(z,n,t), \qquad
\frac{d}{dt} \hat\psi_\pm(z,n,t) = P_{2r+2}(t) \hat\psi_\pm(z,n,t).
\ee
\end{lemma}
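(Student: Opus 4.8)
The plan is to prove the two identities in \eqref{H t 1} separately. First observe that, by Lemma~\ref{lemtsr} applied with weight $w(n)=1+|n|$ and $p=1$, the decay assumption \eqref{hypo} holds for every $t\in\R$; hence the scattering data of \cite{emtstp2} (see also \cite{emtqps}, \cite{emtstp}), in particular the Jost solutions $\psi_\pm(z,n,t)$, are defined for all $t$, depend differentiably on $t$, and satisfy $\psi_\pm(z,n,t)=\psi_q^\pm(z,n,t)(1+o(1))$ as $n\to\pm\infty$ with the error and its $t$-derivative tending to zero in $n$ uniformly on compact $t$-intervals. Moreover $\alpha_r^\pm(z,t)$ is differentiable with $\frac{d}{dt}\alpha_r^\pm(z,t)=2a_q^\pm(0,t)G_{q,r}^\pm(z,0,t)\psi_q^\pm(z,1,t)-H_{q,r+1}^\pm(z,0,t)=\big(P_{q,2r+2}^\pm(t)\psi_q^\pm\big)(z,0,t)$, using $\psi_q^\pm(z,0,t)=1$ and the restricted form \eqref{btqptFG}. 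Since $\psi_\pm(z,\cdot,t)$ solves $\tau(t)\psi_\pm=z\psi_\pm$ and $\exp(\alpha_r^\pm(z,t))$ does not depend on $n$, the first identity in \eqref{H t 1} is immediate.

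For the time evolution set $\phi_\pm(z,n,t):=\frac{d}{dt}\hat\psi_\pm(z,n,t)-\big(P_{2r+2}(t)\hat\psi_\pm\big)(z,n,t)$. Differentiating $\tau(t)\hat\psi_\pm=z\hat\psi_\pm$ with respect to $t$ and using that the Lax equation \eqref{laxp} -- equivalently, the Toda hierarchy equations \eqref{tlrabo} satisfied by $(a,b)$ -- gives $\dot\tau(t)=[P_{2r+2}(t),\tau(t)]$ as difference expressions, the usual short computation yields $(\tau(t)-z)\phi_\pm=0$; that is, $\phi_\pm(z,\cdot,t)$ again solves the Jacobi difference equation at energy $z$. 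Now fix $z\in\C\setminus\sigma(H)$, so that $\hat\psi_\pm(z,\cdot,t)$ is, up to the $n$-independent factor $\exp(\alpha_r^\pm)$, the (exponentially decaying) Weyl solution of $H(t)$ near $\pm\infty$. By the restricted form \eqref{btqptFG}, $\big(P_{2r+2}(t)\hat\psi_\pm\big)(z,n,t)/\hat\psi_\pm(z,n,t)=2a(n,t)G_r(z,n,t)\frac{\hat\psi_\pm(z,n+1,t)}{\hat\psi_\pm(z,n,t)}-H_{r+1}(z,n,t)$; as $n\to\pm\infty$ the coefficients $a,G_r,H_{r+1}$ converge to $a_q^\pm,G_{q,r}^\pm,H_{q,r+1}^\pm$ (they are polynomials in finitely many shifts of $(a,b)$, resp.\ $(a_q^\pm,b_q^\pm)$, cf.\ the proof of Lemma~\ref{lemtsr}) and $\frac{\hat\psi_\pm(z,n+1,t)}{\hat\psi_\pm(z,n,t)}-\frac{\hat\psi_q^\pm(z,n+1,t)}{\hat\psi_q^\pm(z,n,t)}\to0$, so this ratio has the same asymptotics as $\big(P_{q,2r+2}^\pm(t)\hat\psi_q^\pm\big)/\hat\psi_q^\pm=\big(\frac{d}{dt}\hat\psi_q^\pm\big)/\hat\psi_q^\pm$ (Section~\ref{secqp}). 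On the other hand $\big(\frac{d}{dt}\hat\psi_\pm\big)/\hat\psi_\pm=\frac{d}{dt}\alpha_r^\pm+\big(\frac{d}{dt}\psi_\pm\big)/\psi_\pm$, and since $\frac{d}{dt}\alpha_r^\pm+\big(\frac{d}{dt}\psi_q^\pm\big)/\psi_q^\pm=\big(\frac{d}{dt}\hat\psi_q^\pm\big)/\hat\psi_q^\pm$ while $\big(\frac{d}{dt}\psi_\pm\big)/\psi_\pm-\big(\frac{d}{dt}\psi_q^\pm\big)/\psi_q^\pm\to0$ (uniform decay of the Jost error and its $t$-derivative), we conclude $\phi_\pm(z,n,t)/\hat\psi_\pm(z,n,t)\to0$ as $n\to\pm\infty$. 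A solution of $(\tau(t)-z)u=0$ with a nonzero component along the dominant solution cannot be $o(\hat\psi_\pm)$, so $\phi_\pm=c_\pm(z,t)\hat\psi_\pm$ for some $n$-independent $c_\pm(z,t)$, and then $c_\pm(z,t)=0$. Hence $\phi_\pm\equiv0$ for $z\in\C\setminus\sigma(H)$, and, since $\phi_\pm(z,n,t)$ depends continuously on $z$ up to $\sigma(H)$ under \eqref{hypo}, the second identity in \eqref{H t 1} holds for all $z$.

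I expect the main obstacle to be the analytic bookkeeping in the second step: one must know not merely that $\psi_\pm(z,n,t)-\psi_q^\pm(z,n,t)\to0$, but that this difference together with its $t$-derivative decays in $n$ uniformly for $t$ in compact intervals, so that $\lim_n$ and $\frac{d}{dt}$ may be interchanged and the ratios $\hat\psi_\pm(z,n+1,t)/\hat\psi_\pm(z,n,t)$ converge to those of the background. This is precisely where Lemma~\ref{lemtsr} -- keeping the weighted norm in \eqref{hypo} finite for all $t$ -- combines with the integral-equation representations of the Jost solutions from \cite{emtstp2}, \cite{emtqps}, \cite{emtstp}. A secondary point is that when $z$ lies in a spectral gap of one of the backgrounds the integrand defining $\alpha_r^\pm(z,t)$ has poles, so $\exp(\alpha_r^\pm(z,t))$, and hence $\phi_\pm$, must be understood as a boundary value from $\C\setminus\R$ as explained in Section~\ref{secqp}; the continuity-in-$z$ argument above then still closes the proof.
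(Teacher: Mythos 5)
Your proposal is correct in substance, but it takes a genuinely different route from the paper. The paper's proof evolves the initial Jost solution under the Lax flow: it considers the solution $u_\pm(z,n,t)$ of the system \eqref{H t 1} with initial condition $\psi_\pm(z,n,0)$, invokes the fact (from \cite{ttkm}, \cite[Lemma 12.16]{tjac}) that this flow preserves square summability near $\pm\infty$ for $z\in\C\setminus\sigma$, concludes that $u_\pm$ must be a multiple of the Jost solution, and fixes the multiple by letting $n\to\pm\infty$ and comparing with the time-evolved Baker--Akhiezer function; continuity in $z$ then extends the identity to the spectrum. You instead verify the candidate $\hat\psi_\pm$ directly: you form the residual $\phi_\pm=\frac{d}{dt}\hat\psi_\pm-P_{2r+2}(t)\hat\psi_\pm$, use the Lax equation as an identity of difference expressions to show $(\tau(t)-z)\phi_\pm=0$, and then kill $\phi_\pm$ by showing it is $o(\hat\psi_\pm)$ at $\pm\infty$, which forces both its component along the dominant solution and along $\hat\psi_\pm$ to vanish. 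What your version buys is self-containedness --- you avoid the external existence/preservation lemma for the time-evolution system --- at the price of somewhat heavier asymptotic bookkeeping (convergence of the coefficient ratios $G_r\to G_{q,r}^\pm$, $H_{r+1}\to H_{q,r+1}^\pm$, of $\hat\psi_\pm(n+1)/\hat\psi_\pm(n)$ to the background ratio, and of the $t$-derivative of the Jost error). That last input --- that $\frac{d}{dt}\psi_\pm$ matches $\frac{d}{dt}$ of the Baker--Akhiezer functions as $n\to\pm\infty$ --- is exactly the fact the paper also asserts in the first paragraph of its proof by reference to the analogue of \cite[Theorem 4.2]{emtqps}, so you are not assuming more than the published argument does; you correctly flag it as the main analytic obstacle. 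Two minor points you may wish to make explicit: the ratios $\hat\psi_\pm(z,n+1,t)/\hat\psi_\pm(z,n,t)$ require $\psi_q^\pm(z,n,t)\neq 0$ for large $|n|$ (harmless for $z\in\C\setminus\R$, and only the tail behaviour enters), and the dominant/subordinate dichotomy you use is available precisely because you restrict to $z\in\C\setminus\sigma(H)$ before passing to the boundary values, just as the paper does.
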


\begin{proof}
We proceed as in \cite[Theorem 3.2]{tist}. 
The Jost solutions $\psi_{\pm}(z,n,t)$ are continuously differentiable with respect to
$t$ by the same arguments as for $z$ (compare \cite[Theorem 4.2]{emtqps}),
and the derivatives are equal to the derivatives of the Baker--Akhiezer functions
as $n \rightarrow \pm \infty$.

For $z \in\C\backslash\sig$, the solution $u_{\pm}(z,n,t)$ of \eqref{H t 1} with initial condition 
$\psi_{\pm}(z,n,0) \in \ell_{\pm}^2(\Z)$ remains square summable near $\pm \infty$
for all $t \in \R$ (see \cite{ttkm} or \cite[Lemma 12.16]{tjac}), that is,
$u_{\pm}(z,n,t) = C_\pm(z,t) \psi_{\pm}(z,n,t)$. Letting $n\to\pm\infty$ we
see $C_\pm(z,t)=1$. The general result for all $z \in \C$ now follows from continuity.
\end{proof}

This implies

\begin{theorem} \label{thmtdscat}
Let $(a(t),b(t))$ be a solution of the Toda hierarchy such that \eqref{hypo} holds.
The time evolution for the scattering data is given by
\begin{align}
\begin{split}
T_\pm(\lambda, t) &= T_\pm(\lambda, 0)
\exp(\alpha_r^\mp(\lambda,t)- \ol{\alpha_r^\pm(\lambda,t)}), \\ 
R_\pm(\lambda, t) &= R_\pm(\lambda, 0) 
\exp(\alpha_r^\pm(\lambda,t)- \ol{\alpha_r^\pm(\lambda,t)}), \\  
\gamma_{\pm, k}(t) &= \gamma_{\pm, k}(0) \frac{\delta_\pm^2(\lam_k,0)}{\delta_\pm^2(\lam_k,t)} \exp(2 \alpha_r^\pm(\lam_k,t)), 
\qquad 1 \leq k \leq p.
\end{split}
\end{align}
\end{theorem}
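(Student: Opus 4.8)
The plan is to exploit Lemma~\ref{th: sol H t}, which tells us that the renormalized Jost solutions $\hat\psi_\pm(z,n,t) = \exp(\alpha_r^\pm(z,t))\psi_\pm(z,n,t)$ evolve according to $\frac{d}{dt}\hat\psi_\pm = P_{2r+2}(t)\hat\psi_\pm$ while remaining eigenfunctions of $H(t)$ at energy $z$. The strategy for each piece of scattering data is the same: express it as a Wronskian (or an $\ell^2$ norm) of Jost solutions, substitute $\psi_\pm = \exp(-\alpha_r^\pm)\hat\psi_\pm$, and use the fact that Wronskians of two solutions of the Lax system $\frac{d}{dt}u = P_{2r+2}(t)u$ associated to \emph{the same} energy $z$ are time-independent. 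Concretely, if $u$ and $v$ both satisfy $H(t)u = zu$, $\dot u = P_{2r+2}(t)u$ and likewise for $v$, then a direct computation using the representation \eqref{btqptFG} of $P_{2r+2}(t)$ on $\Ker(\tau(t)-z)$ shows $\frac{d}{dt}W_n(u,v) = 0$; this is the discrete analogue of the standard fact that the Wronskian of two solutions of a Lax-evolved linear problem is conserved, and it is already implicit in \cite[Lemma~12.16]{tjac}. For the antiholomorphic solutions $\overline{\psi_\pm(\lambda,t)}$ on $\lambda\in\R$, note $\overline{\hat\psi_\pm(\lambda,n,t)} = \exp(\overline{\alpha_r^\pm(\lambda,t)})\,\overline{\psi_\pm(\lambda,n,t)}$ also solves the Lax system (conjugate the equations, using that $H(t)$ and $P_{2r+2}(t)$ have real coefficients).

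First I would treat the transmission coefficient. From
\[
T_\pm(\lambda,t) = \frac{W(\overline{\psi_\pm(\lambda,t)},\psi_\pm(\lambda,t))}{W(\psi_\mp(\lambda,t),\psi_\pm(\lambda,t))}
\]
I substitute $\psi_\pm = \exp(-\alpha_r^\pm)\hat\psi_\pm$, $\overline{\psi_\pm} = \exp(-\overline{\alpha_r^\pm})\,\overline{\hat\psi_\pm}$, and $\psi_\mp = \exp(-\alpha_r^\mp)\hat\psi_\mp$, pulling the (space-independent) exponential prefactors out of the Wronskians. The numerator picks up $\exp(-\overline{\alpha_r^\pm(\lambda,t)} - \alpha_r^\pm(\lambda,t))$ and the denominator $\exp(-\alpha_r^\mp(\lambda,t) - \alpha_r^\pm(\lambda,t))$; since the Wronskians $W(\overline{\hat\psi_\pm},\hat\psi_\pm)$ and $W(\hat\psi_\mp,\hat\psi_\pm)$ are independent of $t$ by the conservation law above, they equal their values at $t=0$, where $\alpha_r^\pm(\lambda,0)=0$. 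Dividing, the $-\alpha_r^\pm(\lambda,t)$ terms cancel and we are left with
\[
T_\pm(\lambda,t) = T_\pm(\lambda,0)\,\exp\big(\alpha_r^\mp(\lambda,t) - \overline{\alpha_r^\pm(\lambda,t)}\big),
\]
as claimed. The reflection coefficient $R_\pm(\lambda,t) = -W(\psi_\mp(\lambda,t),\overline{\psi_\pm(\lambda,t)})/W(\psi_\mp(\lambda,t),\psi_\pm(\lambda,t))$ is handled identically: numerator gains $\exp(-\alpha_r^\mp - \overline{\alpha_r^\pm})$, denominator $\exp(-\alpha_r^\mp - \alpha_r^\pm)$, the $-\alpha_r^\mp$ cancels, and the ratio of time-independent Wronskians gives $R_\pm(\lambda,0)\exp(\alpha_r^\pm(\lambda,t) - \overline{\alpha_r^\pm(\lambda,t)})$.

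For the norming constants I would start from $\gamma_{\pm,k}(t)^{-1} = \sum_{n\in\Z}|\tilde\psi_\pm(\lambda_k,n,t)|^2$ with $\tilde\psi_\pm = \delta_\pm(z,t)\psi_\pm(z,n,t)$. At the eigenvalue $\lambda_k\in\R\setminus\sigma_{ess}(H)$ the function $\alpha_r^\pm(\lambda_k,t)$ is real (the integrand in its definition is real there, after taking the appropriate limit), so $\exp(\alpha_r^\pm(\lambda_k,t))$ is a real scalar and $\hat\psi_\pm(\lambda_k,n,t) = \exp(\alpha_r^\pm(\lambda_k,t))\psi_\pm(\lambda_k,n,t)$ is a real $\ell^2(\Z)$ eigenfunction evolving by the Lax system. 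Hence $\sum_n |\psi_\pm(\lambda_k,n,t)|^2 = \exp(-2\alpha_r^\pm(\lambda_k,t))\sum_n \hat\psi_\pm(\lambda_k,n,t)^2$; I then argue that $\sum_n \hat\psi_\pm(\lambda_k,n,t)^2$ is $t$-independent — this follows because $P_{2r+2}(t)$ is skew-adjoint on $\ell^2(\Z)$ (being $H(t)^{r+1}$ times antisymmetric combinations; more precisely $\frac{d}{dt}\|\hat\psi_\pm\|^2 = 2\,\re\spr{\hat\psi_\pm}{P_{2r+2}(t)\hat\psi_\pm} = 0$ since $P_{2r+2}(t)^* = -P_{2r+2}(t)$), so it equals its $t=0$ value. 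Combining with $\delta_\pm(\lambda_k,t)^2$ from the renormalization and evaluating at $t=0$ gives
\[
\gamma_{\pm,k}(t) = \gamma_{\pm,k}(0)\,\frac{\delta_\pm^2(\lambda_k,0)}{\delta_\pm^2(\lambda_k,t)}\,\exp\big(2\alpha_r^\pm(\lambda_k,t)\big).
\]
The main obstacle, and the point deserving the most care, is justifying the conservation laws: the vanishing of $\frac{d}{dt}W_n(u,v)$ for two Lax-evolved solutions at the same energy, and the skew-adjointness argument for $\frac{d}{dt}\|\hat\psi_\pm\|^2$. For the Wronskian this is a finite computation with the explicit $2a(t)G_r S^+ - H_{r+1}$ form of $P_{2r+2}$ acting on $\Ker(\tau(t)-z)$, showing the $G_r$ and $H_{r+1}$ contributions telescope; for the norm one must make sure the formal skew-adjointness of $P_{2r+2}(t)$ on the dense domain of finitely-supported sequences combined with the known decay of $\hat\psi_\pm(\lambda_k,\cdot,t)$ legitimately yields $\frac{d}{dt}\|\hat\psi_\pm(\lambda_k,\cdot,t)\|^2 = 0$ — here one can alternatively invoke $\|\hat\psi_\pm(\lambda_k,t)\| = \|U_r(t,0)\hat\psi_\pm(\lambda_k,0)\|$ directly from the unitary propagator $U_r(t,s)$ introduced at the end of Section~\ref{secth}, since $\hat\psi_\pm(\lambda_k,\cdot,t) = U_r(t,0)\hat\psi_\pm(\lambda_k,\cdot,0)$ by uniqueness of the solution to the Lax evolution with the given initial data, which sidesteps the domain issues entirely. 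Continuity in $z$ (already noted in the proof of Lemma~\ref{th: sol H t}) then extends the Wronskian identities, derived for $z\notin\sigma$, to the boundary values on $\sigma_\pm$.
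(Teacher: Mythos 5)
Your proposal is correct and follows essentially the same route as the paper: conservation in $n$ and $t$ of the Wronskian of two Lax-evolved solutions at the same energy for $T_\pm$ and $R_\pm$ (the paper cites \cite[Lemma~12.15]{tjac} for precisely this), and unitarity of the propagator $U_r(t,0)$ for the norming constants. The additional detail you supply --- the explicit skew-adjointness/telescoping justification of the conservation laws, the reality of $\alpha_r^\pm(\lam_k,t)$ at the eigenvalues, and the continuity argument extending the Wronskian identities to the boundary values on $\sigma_\pm$ --- simply fills in steps the paper leaves implicit.
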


\begin{proof}
The Wronskian of two solutions satisfying \eqref{H t 1} 
does not depend on $n$ or $t$ (see \cite{ttkm}, \cite[Lemma 12.15]{tjac}), hence
\begin{align*} 
T_\pm(\lambda, t) &= \frac{W(\overline{\psi_\pm(\lam,t)}, \psi_\pm(\lam,t))}{W(\psi_\mp(\lam,t), \psi_\pm(\lam,t))} 
= \frac{\exp(\alpha_r^\mp(\lambda,t))}
{\exp(\overline{\alpha_r^\pm(\lambda,t)})}
\frac{W(\overline{\hat \psi_\pm(\lam,t)}, \hat \psi_\pm(\lam,t))}
{W(\hat \psi_\mp(\lam,t), \hat \psi_\pm(\lam,t))}\\ 
&= \exp(\alpha_r^\mp(\lambda,t)-\overline{\alpha_r^\pm(\lambda,t)}) T_\pm(\lambda, 0), \\
R_{\pm}(\lambda, t) &= - \frac{W(\psi_\mp(\lam),\overline{\psi_\pm(\lam)})}{W(\psi_\mp(\lam), \psi_\pm(\lam))}=
- \frac{\exp(\alpha_r^\pm(\lambda,t))}
{\exp(\overline{\alpha_r^\pm(\lambda,t)})}
\frac{W(\hat \psi_\mp(\lam),\overline{\hat \psi_\pm(\lam)})}{W(\hat \psi_\mp(\lam), \hat \psi_\pm(\lam))} \\ \nn
&= \exp(\alpha_r^\pm(\lambda,t)-\overline{\alpha_r^\pm(\lambda,t)}) R_{\pm}(\lambda, 0).
\end{align*}
The time dependence of $\gamma_{\pm, k}(t)$ follows from
$\|U_r(t,0) \tilde\psi_\pm(\lam_k,.,0)\|= \|\tilde\psi_\pm(\lam_k,.,0)\|$.
\end{proof}

\begin{remark}
Note that we have
\be
\exp \big(\alpha_r^\pm(\lam,t) - \ol{\alpha_r^\pm(\lam,t)}\big) =
\exp\left( P_\pm(\lam) \int_0^t \frac{G_{q,r}^\pm(\lam,0,s)}{\prod_{j=1}^{g_\pm} (\lam-\mu^\pm_j(s))} ds\right),
\quad \lam\in\sig_\pm,
\ee
where $\mu^\pm_j(t)$ are the Dirichlet eigenvalues.
Moreover, in case of the Toda lattice, where $r=0$, we have $G_{q,0}^\pm(\lam,n,t)=1$ and
$H_{q,1}^\pm(\lam,n,t)=\lam-b^\pm_q(n,t)$.
\end{remark}

In summary, since Lemma~\ref{lemtsr} ensures that \eqref{hypo} remains
valid for all $t$ once it holds for the initial condition, we can compute
$R_\pm(\lambda,0)$ and $\gamma_{\pm,k}(0)$ from $(a(n,0), \linebreak b(n,0))$ and then
solve the Gel'fand--Levitan--Marchenko (GLM) equation to obtain the sequences
$(a(n,t), b(n,t))$ as in \cite{emtstp2}. More precisely, one needs to solve the GLM
equation
\be
K_\pm(n,m,t) + \sum_{l=n}^{\pm \infty} K_\pm(n,l,t)F_\pm(l,m,t) =
\frac{\delta_n(m)}{K_\pm(n,n,t)}, \qquad \pm m \geq \pm n,
\ee
for $K_\pm(n,m,t)$, where according to Theorem~\ref{thmtdscat} the kernel
$F_\pm(m,n,t)$ is given by

\begin{theorem}
The time dependence of the kernel of the Gel'fand--Levitan--\linebreak Marchenko equation is
given by
\begin{align} \nn
F_\pm(m,n,t) &=
\frac{1}{\pi} \re\int_{\sig_\pm}
R_\pm(\lambda,0) \hat \psi_q^\pm(\lambda, m,t) \hat \psi_q^\pm(\lambda, n, t) 
\frac{\prod_{j=1}^{g_\pm} (\lambda - \mu_j^\pm(0))}{P_\pm(\lambda)}  d\lambda\\
&\quad + \frac{1}{2\pi\I}\int_{\sig^{(1)}_\mp} |T_\mp(\lambda,0)|^2
\hat \psi_q^\pm(\lambda,m,t) \hat \psi_q^\pm(\lambda,n,t) \frac{\prod_{j=1}^{g_\pm} (\lambda - \mu_j^\pm(0))}{P_\mp(\lambda)}
d\lambda\\\nn
&\quad  + \sum_{k=1}^p \gamma_{\pm,k}(0) \breve \psi_q^\pm(\lam_k,m,t)\breve \psi_q^\pm(\lam_k, n,t), 
\end{align}
where $\breve\psi_q^\pm(z,m,t)=\delta_\pm(z,0)\hat\psi_q^\pm(z,m,t)$.
\end{theorem}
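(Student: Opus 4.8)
The plan is to start from the known time-independent formula for the Gel'fand--Levitan--Marchenko kernel established in \cite{emtstp2}, which at the fixed time $t_0=0$ reads
\begin{align*}
F_\pm(m,n,0) &=
\frac{1}{\pi} \re\int_{\sig_\pm}
R_\pm(\lambda,0) \hat \psi_q^\pm(\lambda, m,0) \hat \psi_q^\pm(\lambda, n,0)
\frac{\prod_{j=1}^{g_\pm} (\lambda - \mu_j^\pm(0))}{P_\pm(\lambda)}  d\lambda\\
&\quad + \frac{1}{2\pi\I}\int_{\sig^{(1)}_\mp} |T_\mp(\lambda,0)|^2
\hat \psi_q^\pm(\lambda,m,0) \hat \psi_q^\pm(\lambda,n,0) \frac{\prod_{j=1}^{g_\pm} (\lambda - \mu_j^\pm(0))}{P_\mp(\lambda)} d\lambda\\
&\quad  + \sum_{k=1}^p \gamma_{\pm,k}(0) \breve \psi_q^\pm(\lam_k,m,0)\breve \psi_q^\pm(\lam_k, n,0),
\end{align*}
and then simply insert the time dependence of each ingredient. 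For the continuous part one combines the $t$-evolution of $R_\pm(\lambda,t)$ and $T_\mp(\lambda,t)$ from Theorem~\ref{thmtdscat} with the $t$-evolution of the background Baker--Akhiezer functions $\hat\psi_q^\pm(\lambda,n,t) = \exp(\alpha_r^\pm(\lambda,t))\psi_q^\pm(\lambda,n,t)$ from Section~\ref{secqp}. For the discrete part one uses the evolution of $\gamma_{\pm,k}(t)$ together with the definition $\breve\psi_q^\pm(z,m,t)=\delta_\pm(z,0)\hat\psi_q^\pm(z,m,t)$.

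The key computation is bookkeeping of phase factors. For the reflection term, $R_\pm(\lambda,t)=R_\pm(\lambda,0)\exp(\alpha_r^\pm(\lambda,t)-\ol{\alpha_r^\pm(\lambda,t)})$ while $\hat\psi_q^\pm(\lambda,m,t)\hat\psi_q^\pm(\lambda,n,t)=\exp(2\alpha_r^\pm(\lambda,t))\psi_q^\pm(\lambda,m,t)\psi_q^\pm(\lambda,n,t)$. One has to check that these combine to reproduce the time-independent form after accounting for the fact that on $\sig_\pm$ one uses the boundary value $\psi_q^\pm(\lambda,n,t)=\lim_{\eps\downarrow 0}\psi_q^\pm(\lambda+\I\eps,n,t)$ and the $\re$ picks out the correct combination; concretely the product $R_\pm(\lambda,t)\hat\psi_q^\pm(\lambda,m,t)\hat\psi_q^\pm(\lambda,n,t)$ equals $\exp(3\alpha_r^\pm-\ol{\alpha_r^\pm})$ times the $t=0$ data, and one checks against what the formula in the statement claims — i.e.\ the statement's integrand at time $t$ is $R_\pm(\lambda,0)\hat\psi_q^\pm(\lambda,m,t)\hat\psi_q^\pm(\lambda,n,t)\cdot\tfrac{\prod(\lambda-\mu_j^\pm(0))}{P_\pm(\lambda)}$, and the content of the theorem is precisely that the naive substitution $\hat\psi_q^\pm(\cdot,0)\mapsto\hat\psi_q^\pm(\cdot,t)$, $R_\pm(0)$ \emph{kept} fixed, $\gamma_{\pm,k}(0)$ \emph{kept} fixed, already produces the correct kernel $F_\pm(m,n,t)$. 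So the real task is to verify that the extra phases generated by replacing $R_\pm(\lambda,0)$ by $R_\pm(\lambda,t)$ and $\gamma_{\pm,k}(0)$ by $\gamma_{\pm,k}(t)$, respectively by replacing $\breve\psi_q^\pm(\lam_k,\cdot,0)$ by $\breve\psi_q^\pm(\lam_k,\cdot,t)$, are exactly the phases needed to turn the genuine spectral representation of $F_\pm(m,n,t)$ (obtained by expanding the resolvent / spectral measure of $H(t)$) into the stated expression. Equivalently: the objects $R_\pm(\lambda,0)$, $|T_\mp(\lambda,0)|^2$, $\gamma_{\pm,k}(0)\delta_\pm^2(\lam_k,0)$ are the \emph{invariants} of the flow, and the only genuine $t$-dependence of $F_\pm$ sits in the background functions $\hat\psi_q^\pm(\cdot,t)$, $\delta_\pm(\cdot,0)$ being likewise fixed; note $\breve\psi_q^\pm(z,m,t)=\delta_\pm(z,0)\hat\psi_q^\pm(z,m,t)$ carries the factor $\delta_\pm(z,0)$, not $\delta_\pm(z,t)$, which is precisely what absorbs the $\delta_\pm^2(\lam_k,0)/\delta_\pm^2(\lam_k,t)$ appearing in $\gamma_{\pm,k}(t)$.

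Concretely I would proceed as follows. First, recall from \cite{emtstp2} that $F_\pm(m,n,t)$ is defined by a spectral/contour integral built from the Jost solution $\hat\psi_\pm(z,n,t)$, the transmission coefficient $T_\pm(z,t)$ and the residues at the eigenvalues $\lam_k$, namely an expression of Cauchy-type over $\sig$ together with a sum over $\sig_p(H)$; this is the time-$t$ analogue of \cite[Theorem~A.?]{emtstp2}. Second, substitute the scattering relations and the identities $\hat\psi_\pm = \exp(\alpha_r^\pm(z,t))\psi_\pm$ and $\hat\psi_q^\pm = \exp(\alpha_r^\pm(z,t))\psi_q^\pm$ together with $\psi_\pm = \psi_q^\pm(1+o(1))$ to rewrite everything in terms of the background Baker--Akhiezer functions and the \emph{initial} scattering data, cancelling the $\exp(\alpha_r^\pm)$ factors coming from Jost solutions against those in $\hat\psi_q^\pm$ exactly as in the proof of Theorem~\ref{thmtdscat}. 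Third, on the one-sided continuous spectrum $\sig^{(1)}_\mp$ use $|T_\mp(\lambda,t)|^2 = |T_\mp(\lambda,0)|^2 |\exp(\alpha_r^\pm(\lambda,t)-\ol{\alpha_r^\pm(\lambda,t)})|^{-1}\cdot|\cdots|$; since $\alpha_r^\mp$ appears and its real part on $\sig_\mp$ controls the modulus, one gets $|T_\mp(\lambda,t)|^2 = |T_\mp(\lambda,0)|^2$ on $\sig^{(1)}_\mp$ because there $\alpha_r^\mp(\lambda,t)$ is purely imaginary (the transmission coefficient has unit modulus on the simple spectrum of the \emph{other} side — this is the place where the two-sided structure matters and is the main point to get right). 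Fourth, for the discrete part, combine $\gamma_{\pm,k}(t)=\gamma_{\pm,k}(0)\delta_\pm^2(\lam_k,0)\delta_\pm^{-2}(\lam_k,t)\exp(2\alpha_r^\pm(\lam_k,t))$ with $\breve\psi_q^\pm(\lam_k,m,t)\breve\psi_q^\pm(\lam_k,n,t)=\delta_\pm^2(\lam_k,0)\exp(2\alpha_r^\pm(\lam_k,t))\psi_q^\pm(\lam_k,m,t)\psi_q^\pm(\lam_k,n,t)$ and the fact that the true residue contribution to $F_\pm(m,n,t)$ involves $\delta_\pm^2(\lam_k,t)$ together with the norming constants of $\tilde\psi_\pm(\lam_k,\cdot,t)$; the factors $\delta_\pm^2(\lam_k,t)$ cancel, leaving exactly $\gamma_{\pm,k}(0)\breve\psi_q^\pm(\lam_k,m,t)\breve\psi_q^\pm(\lam_k,n,t)$.

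The main obstacle is the second step: one must carefully track which of the several $\exp(\alpha_r^\pm)$ / $\exp(\alpha_r^\mp)$ factors come from the Jost solutions (via Lemma~\ref{th: sol H t}) versus from the background Baker--Akhiezer functions, and in particular verify that the \emph{modulus} of the transmission coefficient on $\sig^{(1)}_\mp$ is time-independent — this is exactly the step where the steplike (two different backgrounds) structure enters nontrivially, since in the classical one-background case $|T|^2$ on the continuous spectrum is automatically $t$-independent by unitarity, whereas here one must argue that $\alpha_r^\mp(\lambda,t)-\ol{\alpha_r^\mp(\lambda,t)}$ is purely imaginary on $\sig_\mp$, which follows from the definition of $\alpha_r^\mp$ and the reality of the background coefficients $a_q^\mp, b_q^\mp$ and the fact that $P_\mp(\lambda)$ is purely imaginary on $\sig_\mp$. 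Once this phase bookkeeping is done, the rest is the routine substitution outlined above, and the theorem follows directly from the time-$0$ representation of \cite{emtstp2} combined with Theorem~\ref{thmtdscat}.
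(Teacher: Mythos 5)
Your overall strategy is the same as the paper's: write down the representation of the kernel from \cite[Theorem 4.1]{emtstp2} at time $t$ (legitimate, since Lemma~\ref{lemtsr} guarantees \eqref{hypo} for all $t$), substitute the evolution of the scattering data from Theorem~\ref{thmtdscat} together with $\hat\psi_q^\pm=\exp(\alpha_r^\pm)\psi_q^\pm$, and check that all phases cancel. Your treatment of the discrete part (cancellation of $\delta_\pm^2(\lam_k,t)$ against the factor in $\gamma_{\pm,k}(t)$, leaving $\gamma_{\pm,k}(0)\delta_\pm^2(\lam_k,0)$) is correct and is exactly what the paper does.

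However, there is a genuine error at the one step you yourself single out as ``the main point to get right.'' You claim that $|T_\mp(\lam,t)|^2=|T_\mp(\lam,0)|^2$ on $\sig^{(1)}_\mp$ because $\alpha_r^\mp(\lam,t)$ is purely imaginary there. Neither statement is true. From Theorem~\ref{thmtdscat},
\[
\frac{|T_\mp(\lam,t)|^2}{|T_\mp(\lam,0)|^2}=\exp\bigl(2\re\alpha_r^\pm(\lam,t)-2\re\alpha_r^\mp(\lam,t)\bigr),
\]
and the quantity $\exp(\alpha_r^\mp+\ol{\alpha_r^\mp})=\prod_j\frac{\lam-\mu_j^\mp(t)}{\lam-\mu_j^\mp(0)}$ is generically different from $1$ on $\sig_\mp$ (your observation that $\alpha_r^\mp-\ol{\alpha_r^\mp}$ is purely imaginary is vacuous: $w-\ol w$ always is, and it controls the \emph{phase} of $T_\mp$, not its modulus). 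So the modulus of the transmission coefficient genuinely changes in time on the multiplicity-one spectrum; with your assumption the $\sig^{(1)}_\mp$ integral does not close. The actual mechanism, which is what the paper's two displayed observations encode, is different: on $\sig^{(1)}_\mp$ the function $\alpha_r^\pm(\lam,t)$ (the one attached to the \emph{other} background, for which $\lam$ lies off the spectrum $\sig_\pm$) is real, and the identity
\[
\exp\bigl(\alpha_r^\pm(\lam,t)+\ol{\alpha_r^\pm(\lam,t)}\bigr)=\prod_{j=1}^{g_\pm}\frac{\lam-\mu_j^\pm(t)}{\lam-\mu_j^\pm(0)}
\]
converts the nonconstant real factors coming from $|T_\mp(\lam,t)|^2$ and from $\psi_q^\pm(\cdot,t)\mapsto\hat\psi_q^\pm(\cdot,t)$ into exactly the change of the spectral weight $\prod_j(\lam-\mu_j(t))\mapsto\prod_j(\lam-\mu_j(0))$ appearing in the statement. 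This identity is also what you are missing in the first (reflection) term: keeping $R_\pm(\lam,0)$ fixed and replacing $\hat\psi_q^\pm(\cdot,0)$ by $\hat\psi_q^\pm(\cdot,t)$ leaves an uncancelled real factor $\exp(-\alpha_r^\pm-\ol{\alpha_r^\pm})$, which is absorbed precisely by replacing $\prod_j(\lam-\mu_j^\pm(t))$ by $\prod_j(\lam-\mu_j^\pm(0))$ in the weight; your ``$\exp(3\alpha_r^\pm-\ol{\alpha_r^\pm})$'' bookkeeping never confronts this. In short: right framework, but the cancellation you need is not unitarity of $T$ on the simple spectrum (which fails); it is the interplay between the real part of $\alpha_r$ and the motion of the Dirichlet divisor.
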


\begin{proof}
The kernel $F_\pm(m,n,0)$ is derived in \cite[Theorem 4.1]{emtstp2}.
Observe that $\alpha_r^\mp(\lambda,t)$ are real valued
on the set $\sig^{(1)}_\mp$ and 
\[
\exp \big(\alpha_r^\pm(\lam,t) + \ol{\alpha_r^\pm(\lam,t)}\big) 
= \prod_{j=1}^{g_\pm} \frac{\lambda - \mu_j^\pm(t)}{\lambda - \mu_j^\pm(0)}, \quad \lam \in \sig_\pm,
\]
then our result follows from \eqref{hatpsi} and Theorem~\ref{thmtdscat}.
\end{proof}

By \cite{emtstp2} this equation is uniquely solvable and the solution of the Toda hierarchy can
be obtained from either $K_+(n,m,t)$ or $K_-(n,m,t)$ by virtue of
\begin{align}\nn
a(n,t) &= a_q^+(n,t)\frac{K_+(n+1, n+1,t)}{K_+(n,n,t)} = a_q^-(n,t) \frac{K_-(n, n,t)}{K_-(n+1,n+1,t)}, \\ \nn
b(n,t) &= b_q^+(n,t) + a_q^+(n,t)\frac{K_+(n, n+1,t)}{K_+(n,n,t)} -
a_q^+(n-1,t) \frac{K_+(n-1,n,t)}{K_+(n-1,n-1,t)}, \\
 &= b_q^-(n,t) + a_q^-(n-1,t)\frac{K_-(n, n-1,t)}{K_-(n,n,t)} - a_q^-(n,t)
\frac{K_-(n+1, n,t)}{K_-(n+1,n+1,t)}.
\end{align}

%\section*{Acknowledgments}
%
%We thank ....

\end{document}